\newcommand{\ignore}[1]{}
\newtheorem{theorem}{Theorem}
\newtheorem{claim}[theorem]{Claim}
\newtheorem{definition}[theorem]{Definition}
\newtheorem{prop}[theorem]{Proposition}
\newtheorem{corollary}[theorem]{Corollary}
\newcommand{\cB}{{\cal B}}
\newcommand{\cD}{{\cal D}}
\newcommand{\cF}{{\cal F}}
\newcommand{\cN}{{\cal N}}
\newcommand{\cT}{{\cal T}}
\newcommand{\sgn}{\mbox{sign}}
\newcommand{\RR}{\mathbb{R}}
\newcommand{\wh}[1]{\widehat{#1}}
\newcommand{\bal}{\delta}
\newcommand{\biasinf}[3]{{\Inf_{#2}}({#3};{#1})}
\newcommand{\biastot}[2]{\I(#2;{#1})}
\newcommand{\dm}{K}
\newcommand{\pr}{{\rm Pr}}
\newcommand{\qed}{\hfill $\Box$}
\newcommand{\EX}{\hbox{\bf E}}
\newcommand{\flip}[2]{#1^{(#2)}}
\newcommand{\Inf}{{\rm{Inf}}}
\newcommand{\I}{\mathrm{I}}
\newcommand{\Iex}{\mathrm{I}}
\newcommand{\nbdt}[2]{N^-_{\leq #1,#2}}
\newcommand{\maj}{MAJ}
\newcommand{\al}{AND}
\newcommand{\Sec}[1]{\hyperref[sec:#1]{\S\ref*{sec:#1}}} 
\newcommand{\App}[1]{\hyperref[sec:#1]{Appendix~\ref*{sec:#1}}} 
\newcommand{\Eqn}[1]{\hyperref[eq:#1]{(\ref*{eqn:#1})}} 
\newcommand{\Fig}[1]{\hyperref[fig:#1]{Figure\,\ref*{fig:#1}}} 
\newcommand{\Tab}[1]{\hyperref[tab:#1]{Table\,\ref*{tab:#1}}} 
\newcommand{\Thm}[1]{\hyperref[thm:#1]{Theorem\,\ref*{thm:#1}}} 
\newcommand{\Lem}[1]{\hyperref[lem:#1]{Lemma\,\ref*{lem:#1}}} 
\newcommand{\Prop}[1]{\hyperref[prop:#1]{Prop.~\ref*{prop:#1}}} 
\newcommand{\Cor}[1]{\hyperref[cor:#1]{Cor.~\ref*{cor:#1}}} 
\newcommand{\Def}[1]{\hyperref[def:#1]{Defn.~\ref*{def:#1}}} 
\newcommand{\Alg}[1]{\hyperref[alg:#1]{Alg.~\ref*{alg:#1}}} 
\newcommand{\Ex}[1]{\hyperref[ex:#1]{Ex.~\ref*{ex:#1}}} 
\newcommand{\Clm}[1]{\hyperref[clm:#1]{Claim~\ref*{clm:#1}}} 
\begin{document}

\title{Characterizing short-term stability for Boolean networks over any distribution of transfer
functions}

\author{C. Seshadhri}
\author{Andrew M. Smith}
\affiliation{%
	Sandia National Laboratories, P.O. Box 969, Livermore, California 94551-0969, USA
}%
\author{Yevgeniy Vorobeychik}
\affiliation{Vanderbilt University, Nashville, TN 37235, USA}
\author{Jackson R. Mayo}
\author{Robert C. Armstrong}
\affiliation{%
	Sandia National Laboratories, P.O. Box 969, Livermore, California 94551-0969, USA
}%

\begin{abstract}
We present a characterization of short-term stability of
random Boolean networks under \emph{arbitrary} distributions of transfer functions.  
Given any distribution of transfer functions for a random Boolean network,
we present a formula that decides whether short-term chaos (damage spreading) will happen.
We provide a formal proof for this formula, and empirically show that its predictions are accurate.
Previous work only works for special cases of balanced families. It has been observed
that these characterizations fail for unbalanced families, yet such families are widespread in real
biological networks. 
\end{abstract}

\maketitle

\section{Introduction}

Living systems composed of a wide variety
of cells, genes, or organs operate with uncanny synchrony and
stability, as do numerous engineered and social
systems. 
In a series of seminal papers, Kauffman introduced \emph{Boolean networks}
to study such systems: this abstraction involves a
network representing connectivity, and a family of
Boolean functions determining states of network nodes to model dynamic
behavior~\cite{Kauffman69,Kauffman74}.
Boolean networks have been used to model numerous dynamical
systems, including genetic regulatory networks~\cite{Kauffman69} and
political systems~\cite{Aldana03b}, and have 
received much theoretical attention~\cite{Harris02,Shmulevich04,Moreira05,Aldana03,Kauffman03,Shmulevich03,Mozeika11,Seshadhri11,Squires12}.

A Boolean network has a set of $n$ nodes linked to each other by a directed graph $G$.
Each node $i$ has a Boolean state in $\{-1,+1\}$, an in-degree $K_i$,
and an associated Boolean function
$f_i:\{-1,+1\}^{K_i} \rightarrow \{-1,+1\}$, termed \emph{transfer
  function}.
If the state of node $i$ at time $t$ is $x_i(t)$, its state at time
$t+1$ is described by
\(
x_i(t+1) = f_i(x_{i_1}(t),\ldots,x_{i_{K_i}}(t)).
\)
For the sake of analysis, it is common to study a randomized ensemble
of Boolean networks. The graph $G$ is a directed Erd\H{o}s-R\'{e}nyi network,
where each each vertex $i$ chooses $K_i$ in-neighbors uniformly
at random. There is an underlying distribution (or family)
of Boolean transfer functions $\cF$. Each vertex $i$ independently chooses
the transfer function $f_i$ from $\cF$.

A key parameter of interest is the \emph{short-term stability} of the Boolean network.
Specifically, if a single node has its
state flipped, does the effect of this perturbation die out (quiescence),
exponentially cascade over time (chaos), or is the system right in between (criticality)?
There have been numerous
empirical and mathematical observations about the characteristics of
critical transition points in classes of Boolean
networks~\cite{Harris02,Shmulevich04,Moreira05,Aldana03,Kauffman03,Shmulevich03,Kesseli05,MoSaRa09,MoSaRa10,Seshadhri11,Squires12},
These results require $\cF$ to have specific properties: for example, each truth table
entry is i.i.d. or that functions are \emph{balanced} (number of $+1$ and $-1$ outcomes
is the same) on average.

These are severe restrictions. Various classes of functions 
occur naturally in biological
and social applications, but do not satisfy either of these conditions. 
For example, Kauffman proposed a family of \emph{canalyzing functions}~\cite{Kauffman74}.
A canalyzing function has at least one
input, and one value of that input, that fully determines the output
of the function.
Kauffman observed that many elements of genetic regulatory systems
have like nested canalyzing functions~\cite{Kauffman74,Harris02,Kauffman03,Kauffman04}. 
Previous formal analyses do not yield precise characterizations of short-term stability for such families.

\emph{Threshold functions} also occur in understanding processes on social and biological networks~\cite{Sc78,Gr78,KeKlTa03,LiLo04,DaBo08,ZaAl11,TrMc13}
A threshold function is of the form $f(x_1, x_2,\ldots,x_K) = \sgn(\sum_i c_i x_i - \Theta)$,
where $c_i$s and $\Theta$ are constants. Often there is a bias towards a particular state, so
these previous characterizations fail to predict the critical threshold~\cite{Seshadhri11}.

Our main result gives an exact formula for predicting the short-term dynamics of Boolean networks, for \emph{any} distribution of transfer functions.
We stress that our results are for `semi-annealed' setting. Once we choose the topology of the Boolean network
and the transfer functions from the appropriate distribution, we assume it is fixed. (We do not change
these for each time-step, as in an annealed approximation.) All we need
from the topology is a local tree-structure (as proved in~\cite{Seshadhri11} and subsequently used
in~\cite{Squires12}), which is guaranteed with high probability for Erd\H{o}s-R\'{e}nyi random graph distributions.

While no previous result provides such a formula, our work is closely related to the 
following. Mozeika and
Saad~\cite{MoSaRa09,MoSaRa10,Mozeika11} give a powerful generating function
framework for analysis of Boolean networks, but do not 
characterize short-term stability.
Seshadhri et al.~\cite{Seshadhri11}
introduced the notion of \emph{influence} $\I(\cF)$ of
transfer function distribution $\cF$, an easily computable quantity that determines
the short-term behavior for a highly restricted class of \emph{balanced}
  families $\cF$: on average, functions in $\cF$ are equally likely to output
$+1$ and $-1$.

\vspace{-10pt}

\section{Preliminaries}

We are interested in the sensitivity of a Boolean network state $x(t)
= \{x_1(t)\ldots,x_n(t)\}$ to a small
initial perturbation. Formally, consider the following experiment.
Suppose that a Boolean network starts from state $x$, and after $t$
steps reaches a state $F_t(x)$. 
Now, consider another initial state, $x^{(i)}$ which only differs from $x$
in the $i$th bit. 
Let $H_t$ be the expected Hamming distance between $F_t(x)$ and $F_t(x^{(i)})$,
where $x$ is drawn from some specified (typically uniform) distribution.
How does $H_t$ evolve with time? If $H_t$ can be expressed as $e^{\lambda t}$, then $\lambda$
is the Lyapunov exponent. If $\lambda < 0$, the boolean network
is quiescent; if $\lambda > 0$, the network is chaotic.

We provide some notation and definitions.

\begin{asparaitem}
	\item \textbf{Biased distributions:} We use $\cD_\rho$ to denote the distribution over $\{-1,+1\}$ where the probability of $1$ is $(1+\rho)/2$.
	We choose this notation because the expected value is exactly $\rho$, the bias. 
	Abusing notation, for $y \in \{-1,+1\}^K$, we say $y \sim \cD_\rho$ when each coordinate of $y$ is chosen i.i.d. from $\cD_\rho$.
	\item \textbf{Imbalance:} The \emph{imbalance} of the Boolean network at time $t$, denoted by $\delta_t$, is $\sum_{i=1}^n x_i(t)/n$.
	Informally, this measures the difference between the $+1$s and $-1$s in the network. Observe that if the starting
	state $x(0)$ is chosen from $\cD_\rho$, then $\delta_0 = \rho$. 
\end{asparaitem}
\smallskip
We use tools from 
\emph{harmonic analysis of Boolean functions}, pioneered by Kahn, Kalai, and Linial~\cite{Kahn88}.
The convention in this field is that $-1$ denotes TRUE and $+1$ is FALSE (so multiplication in $\{-1,+1\}$
maps to XOR of $\{0,1\}$ bits). 
Consider $f:\{-1,+1\}^K \rightarrow \{-1,+1\}$, where we think of $f$ as one of the transfer functions.
The standard representation is as a truth table,
with $2^K$ entries in $\{-1,+1\}$. An alternative representation is as a linear combination
of \emph{basis functions}. In the following, we use $y \in \{-1,+1\}^K$ to denote an input
to the transfer function. We use $[K]$ for set $\{1,2,\ldots,K\}$, which denotes
the input coordinates. Refer to~\cite{OD14} for details on the following.
\smallskip
\begin{asparaitem}
	\item \textbf{Parity functions:} For any subset $S$ of coordinates in $[K]$, $\prod_{i \in S} y_i$ is the \emph{parity} on $S$.
	(For $S = \emptyset$, we set the parity to be $1$.)
	\item \textbf{Fourier representation:} Any Boolean function $f$ can be expressed as 
	$f(y) = \sum_{S \subseteq [K]} \widehat{f}(S) \prod_{i \in S} y_i$, where $\widehat{f}(S)$ are called Fourier coefficients.
	This expansion represents $f$ as a multilinear polynomial
	over the Boolean variables $y_1, \ldots, y_K$. 
	It can be shown that $\widehat{f}(S) = 2^{-K} \sum_y f(y)\prod_{i \in S} y_i$, the correlation between
	$f$ and the parity on $S$. 
	(The Fourier coefficients are the Walsh-Hadamard transform 
	of the truth table.) There are exactly $2^K$ different Fourier coefficients, one for each subset of the $K$ inputs.
	For example, consider $K = 2$, and the $AND$ function. A calculation
	yields $AND(y_1, y_2) = 1/2 + y_1/2 + y_2/2 - y_1y_2/2$.
	\item \textbf{Level sets of coefficients, $\sigma_r$:} Of special interest is $\sigma_r(f) = \sum_{C:|C| = r} \widehat{f}(C)$, where $0 \leq r \leq K$.
	This is simply the sum of coefficients corresponding to sets of size $r$. Note that $\sigma_0(f) = \widehat{f}(\emptyset)$ $= \sum_{y} f(y)$.
	This is exactly the imbalance in the truth table of $f$.
	\item \textbf{Influence:} For any function $f$, the influence of the $i$th variable is denoted 
	$\Inf_i(f) = \pr_{y}[f(y) \neq f(\flip{y}{i})]$ (where the probability is over the uniform distribution
	and $\flip{y}{i}$ is obtained by flipping $y$ at the $i$th bit), and the 
	total influence is $\I(f) = \sum_i \Inf_i(f)$. We will define a biased version of this quantity,
	$\Inf_i(f; \rho) = \pr_{y \in \cD_\rho} [f(y) \neq f(\flip{y}{i})]$, and analogously $\I(f; \rho) = \sum_i \Inf_i(f; \rho)$.
\end{asparaitem}

\vspace{-10pt}

\section{Mathematical results}
\vspace{-10pt}

The proofs of our mathematical results are quite involved, and therefore provided in the supplemental material.
We can derive closed form expressions for the evolution of $\delta_t$ (the expected imbalance at time $t$) 
and $H_t$ (the expected Hamming distance at time $t$ after a single bit perturbation).

The evolution of $\delta_t$ ($t > 0$) is determined by the level sets of coefficients of the transfer functions.
We use $\sigma_r(\cF) = \EX_{f \sim \cF}[\sigma_r(f)]$ and $\I(\cF; \delta) = \EX_{f \sim \cF}[\I(f; \delta)]$.

\begin{theorem} \label{thm:rec} Let initial state $x(0)$ be chosen from $\cD_\rho$ (so $\delta_0 = \rho$).
Then $\delta_t$ evolves according to the polynomial recurrence
$\delta_{t+1} = \sum_{r \geq 0} \sigma_r(\cF) \delta^r_{t}$.
\end{theorem}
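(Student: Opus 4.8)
The plan is to reduce this global recurrence to a single-node computation, exploiting the semi-annealed local tree structure. First I would invoke the local tree-structure guarantee of~\cite{Seshadhri11}: with high probability over the Erd\H{o}s-R\'{e}nyi topology, for the short time horizons of interest the in-neighborhoods explored by the dynamics up to time $t$ form disjoint subtrees. Consequently the states $x_{i_1}(t),\ldots,x_{i_{K_i}}(t)$ feeding node $i$ are mutually independent, and since the in-neighbors are chosen uniformly at random, each input is the state of a uniformly random node, whose marginal over the ensemble has bias $\delta_t$. In other words, the input vector to $f_i$ is effectively distributed as $\cD_{\delta_t}$. This is the step I expect to be the main obstacle: it is really an inductive claim about the full joint law of the node states at time $t$ (not merely about the scalar imbalance $\delta_t$), and converting it into the clean ``independent coordinates, each of bias $\delta_t$'' picture is exactly what the tree structure buys us.

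Granting this, the remainder is a clean harmonic-analysis calculation. Fix a transfer function $f$ and draw $y \sim \cD_{\delta_t}$. Expanding in the Fourier basis and using linearity of expectation,
\[
\EX_{y \sim \cD_{\delta_t}}[f(y)] = \sum_{S \subseteq [K]} \widehat{f}(S)\, \EX_{y}\Big[\prod_{i \in S} y_i\Big].
\]
Because the coordinates of $y$ are independent with $\EX[y_i] = \delta_t$, the parity expectation factorizes as $\prod_{i \in S}\EX[y_i] = \delta_t^{|S|}$, so
\[
\EX_{y \sim \cD_{\delta_t}}[f(y)] = \sum_{S \subseteq [K]} \widehat{f}(S)\, \delta_t^{|S|} = \sum_{r \geq 0} \delta_t^{r}\!\!\sum_{|C| = r} \widehat{f}(C) = \sum_{r \geq 0} \sigma_r(f)\, \delta_t^{r},
\]
where the middle equality simply groups the Fourier coefficients by the level sets defining $\sigma_r(f)$.

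Finally I would average over the choice of transfer function and over nodes. Since $f_i$ is drawn independently from $\cF$, taking $\EX_{f \sim \cF}$ of the above and using $\sigma_r(\cF) = \EX_{f\sim\cF}[\sigma_r(f)]$ gives $\EX[x_i(t+1)] = \sum_{r \geq 0}\sigma_r(\cF)\,\delta_t^{r}$ for every node $i$. This value is identical across all $i$, so averaging $x_i(t+1)$ over the $n$ nodes yields $\delta_{t+1} = \sum_{r \geq 0}\sigma_r(\cF)\,\delta_t^{r}$, with a concentration argument ensuring the empirical imbalance tracks its expectation up to lower-order fluctuations. The base case $\delta_0 = \rho$ is immediate from $x(0)\sim\cD_\rho$, which closes the induction and establishes the stated recurrence.
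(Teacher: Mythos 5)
Your proposal is correct and follows essentially the same route as the paper's proof: the paper formalizes your ``independent inputs of bias $\delta_{t-1}$'' picture via the depth-$t$ tree ensemble $\cB_t$, expands the root's transfer function in the Fourier basis (its Claim~\ref{clm:fv}), uses independence of the disjoint child subtrees to factor $\EX[\prod_{i\in A} f_{v_i,t-1}]$ into $\delta_{t-1}^{|A|}$, and then groups coefficients by level to obtain $\sum_{r}\sigma_r(\cF)\delta_{t-1}^r$ --- exactly your computation. The only cosmetic difference is that the paper defines $\delta_t$ as the expectation over the tree ensemble rather than the empirical imbalance, so it never needs the concentration step you sketch at the end.
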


An equivalent formulation of the recurrence has been
derived by the generating function method in Mozeika and
Saad~\cite{Mozeika11}, though their approach is completely different (they
do not show a connection with Fourier coefficients).
Our approach proves a clean description
of this recurrence, since $\sigma_r(\cF)$ can be easily computed from $\cF$.

Our main theorem shows how the damage caused by a bit perturbation spreads.

\begin{theorem} \label{thm:hamm} Let $\delta_0, \delta_1, \ldots$ be as given by \Thm{rec}. For $t \leq (\log n)/K$,
$H_t = \prod_{0 \leq h < t} \I(\cF; \delta_h)$.
\end{theorem}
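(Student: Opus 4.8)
The plan is to track the random set $D_t$ of nodes whose states differ between the two runs started from $x$ and $\flip{x}{i}$, so that $H_t = \EX[|D_t|]$, and to show that $|D_t|$ evolves as a time-inhomogeneous branching process whose mean offspring at step $h$ is exactly $\I(\cF;\delta_h)$. The engine that makes this work is the local tree structure guaranteed with high probability for $t \le (\log n)/K$ (as established in \cite{Seshadhri11}): the forward cone of influence rooted at the perturbed node $i$ is a tree, and the backward cones of depth $h$ of distinct nodes are vertex-disjoint. First I would record the consequence that in this tree every node $w$ reachable from $i$ at depth $h+1$ has exactly one of its $K_w$ inputs inside the cone, namely its parent $u$ at depth $h$; all remaining inputs sit outside the cone and are therefore never damaged. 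Consequently $w$ enters $D_{h+1}$ if and only if $u \in D_h$ and flipping the single coordinate carrying $u$ flips the output of $f_w$ on the current state.

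The second step is the distributional claim that lets me replace ``flips the output of $f_w$'' by a biased influence. By the disjointness of backward cones, the $K_w$ inputs feeding $w$ at time $h$ are mutually independent, and by \Thm{rec} (more precisely the marginal statement behind it) each such input bit has bias $\delta_h$; hence the joint law of the inputs to $w$ is the product distribution $\cD_{\delta_h}$. Therefore, conditioned on $u \in D_h$ and on $u$ occupying coordinate $\ell$ of $w$, the probability that $w$ becomes damaged is, by definition, exactly $\Inf_\ell(f_w;\delta_h)$, the expectation taken over the state randomness and the draw $f_w \sim \cF$.

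I would then compute the mean offspring of a single damaged node $u$ at time $h$ by summing propagation probabilities over its children. Since each node $w$ fills its coordinate $\ell \in [K_w]$ with the uniformly random in-neighbor $u$ with probability $1/n$, the expected contribution is
\[
\sum_{w}\sum_{\ell=1}^{K_w} \frac1n\, \EX_{f_w\sim\cF}\big[\Inf_\ell(f_w;\delta_h)\big]
= \frac1n\sum_{w} \EX_{f_w\sim\cF}\big[\I(f_w;\delta_h)\big]
= \I(\cF;\delta_h),
\]
using $\sum_\ell \Inf_\ell = \I$ and the i.i.d. draw of the $n$ transfer functions. Because each damaged node explores a fresh, disjoint subtree (again by the tree structure), conditioned on $D_h$ the offspring counts of distinct nodes of $D_h$ are independent with this common mean, so $\EX[|D_{h+1}| \mid D_h] = |D_h|\cdot \I(\cF;\delta_h)$. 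Taking expectations gives $H_{h+1} = H_h\cdot \I(\cF;\delta_h)$, and since $H_0 = 1$ the claimed product $H_t = \prod_{0\le h<t}\I(\cF;\delta_h)$ follows by induction.

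The main obstacle I anticipate is rigorously justifying the two idealizations that make the branching picture exact: that the state at time $h$ restricted to any node's inputs is genuinely the product law $\cD_{\delta_h}$, and that the per-node offspring are conditionally independent with mean $\I(\cF;\delta_h)$. Both hold only up to error terms coming from the rare event that the local neighborhood is not a tree and from the fluctuations of the true per-node bias around the global $\delta_h$; the role of the hypothesis $t \le (\log n)/K$ is precisely to keep the cones of influence small enough (sub-polynomial in $n$) that the tree structure holds with high probability and these error terms vanish. Converting the heuristic multiplicativity into a statement with controlled error, and verifying that $\delta_h$ may be treated as the deterministic solution of \Thm{rec} rather than a fluctuating random variable, is where the real work lies.
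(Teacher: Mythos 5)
Your proposal is correct in outline but follows a genuinely different route from the paper. You track damage forward in time from the perturbed node, viewing $|D_h|$ as a branching process whose mean offspring at step $h$ is $\I(\cF;\delta_h)$. The paper instead works backward: it fixes a target vertex $v$, conditions on its depth-$t$ in-neighborhood being a tree (\Clm{g-tree}), and writes $H_t$ as $\EX_{\cB_t}[\sum_\ell \Inf_\ell(f_{v,t};\rho)]$, the total biased influence of the leaves on the root in the idealized tree model $\cB_t$. The product formula then falls out of an \emph{exact} recursion: by the Fourier expansion of the root's transfer function (\Clm{fv}) and \Prop{inf}, the influence of a leaf $\ell$ below child $v_1$ factors as $\EX_{\cF}[\Inf_1(\phi;\delta_{t-1})]\cdot\EX_{\cB_{t-1}}[\Inf_\ell(f_{v_1};\rho)]$, because the sibling subtrees are exactly independent of the subtree containing $\ell$, and a $\{-1,+1\}$-valued variable with mean $\delta_{t-1}$ is exactly distributed as $\cD_{\delta_{t-1}}$. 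This is what the paper's route buys: all of the error control you defer to ``the real work'' (bias fluctuations, treating $\delta_h$ as deterministic, conditional independence of offspring) never arises, since the recursion is an identity on expectations within $\cB_t$ and the only approximation is the single tree-structure event. Your route, in exchange, stays closer to the physical damage-spreading picture and explains directly why $\I$ acts as a per-step multiplier.

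Two steps of your argument need repair, though the conclusions survive. First, you assert that, conditioned on $u \in D_h$, the propagation probability to $w$ equals $\Inf_\ell(f_w;\delta_h)$ ``by definition,'' on the grounds that $w$'s inputs are jointly $\cD_{\delta_h}$-distributed. But conditioning on damage at $u$ can bias the conditional law of $u$'s \emph{own} state, so the joint law of $w$'s inputs given $u \in D_h$ need not be the product $\cD_{\delta_h}$. The step is rescued by a symmetry you do not invoke: the indicator $\mathbf{1}[f_w(y) \neq f_w(y^{(\ell)})]$ is invariant under flipping $y_\ell$, hence depends only on the other coordinates, which \emph{are} independent of the damage event and $\cD_{\delta_h}$-distributed by cone-disjointness; the paper's machinery encodes this automatically, since the Fourier formula for $\Inf_\ell$ in \Prop{inf} involves only coordinates outside $\ell$. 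Second, your blanket assumption that ``backward cones of depth $h$ of distinct nodes are vertex-disjoint'' cannot hold simultaneously for all pairs (the cones collectively cover far more than $n$ vertices); it must be invoked only for the $K$ inputs of each fixed target node $w$, with the offspring count then assembled by linearity of expectation rather than by conditional independence across $D_h$ --- which is, in effect, exactly the per-vertex backward decomposition the paper adopts.
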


In many situations, $\delta_t$ converges to some $\delta^*$. In that case, $H_t \approx [\I(\cF;\delta^*)]^t$.
The Lyapunov exponent is $\log \I(\cF;\delta^*)$, so we get a critical point at $\I(\cF; \delta^*) = 1$.
Our formula gives a provable characterization of short-term stability, for \emph{any} transfer function family $\cF$.

{\bf Balanced families:} As a warmup, we derive previous results that only held for balanced families $\cF$.
In such families, the expected difference (over $\cF$) between $+1$'s and $-1$'s in the transfer functions is exactly zero. This contains the classic random families of Kauffman.
For such a family, $\sigma_0(\cF) = \EX_{f \sim \cF} [\sigma_0(f)] = 0$.
The starting distribution is given by $\cD_0$, so $\delta_0 = 0$. 
Regardless of the values of $\sigma_r(\cF)$ (for $r > 0$),
by \Thm{rec}, $\delta_t = 0$ for all $t$.
Hence, $H_t = [\I(\cF;0)]^t$, and $\I(\cF;0) = 1$ is the critical threshold. This is exactly the main result of~\cite{Seshadhri11}.

\vspace{-10pt}

\section{Applications}

\begin{figure*}[ht!]
\centering
\subfloat[Imbalance for different $\beta$ as $\rho$ increases for threshold function distribution]{
\includegraphics[width=2.25in]{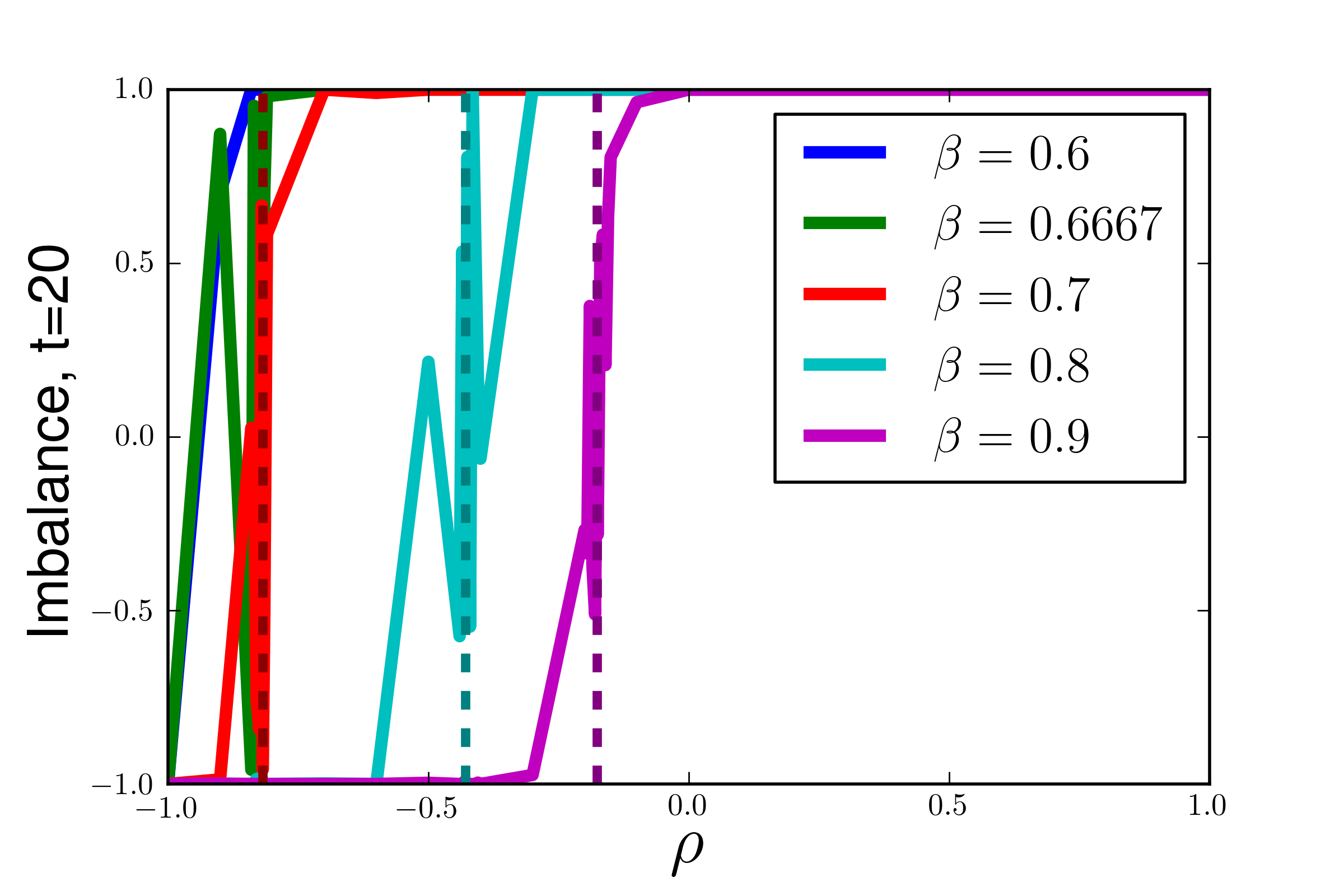} \label{fig:maj-and}
}
\subfloat[Imbalance for different $\rho$ as time increases for nested canalyzing distribution]{
\includegraphics[width=2.25in]{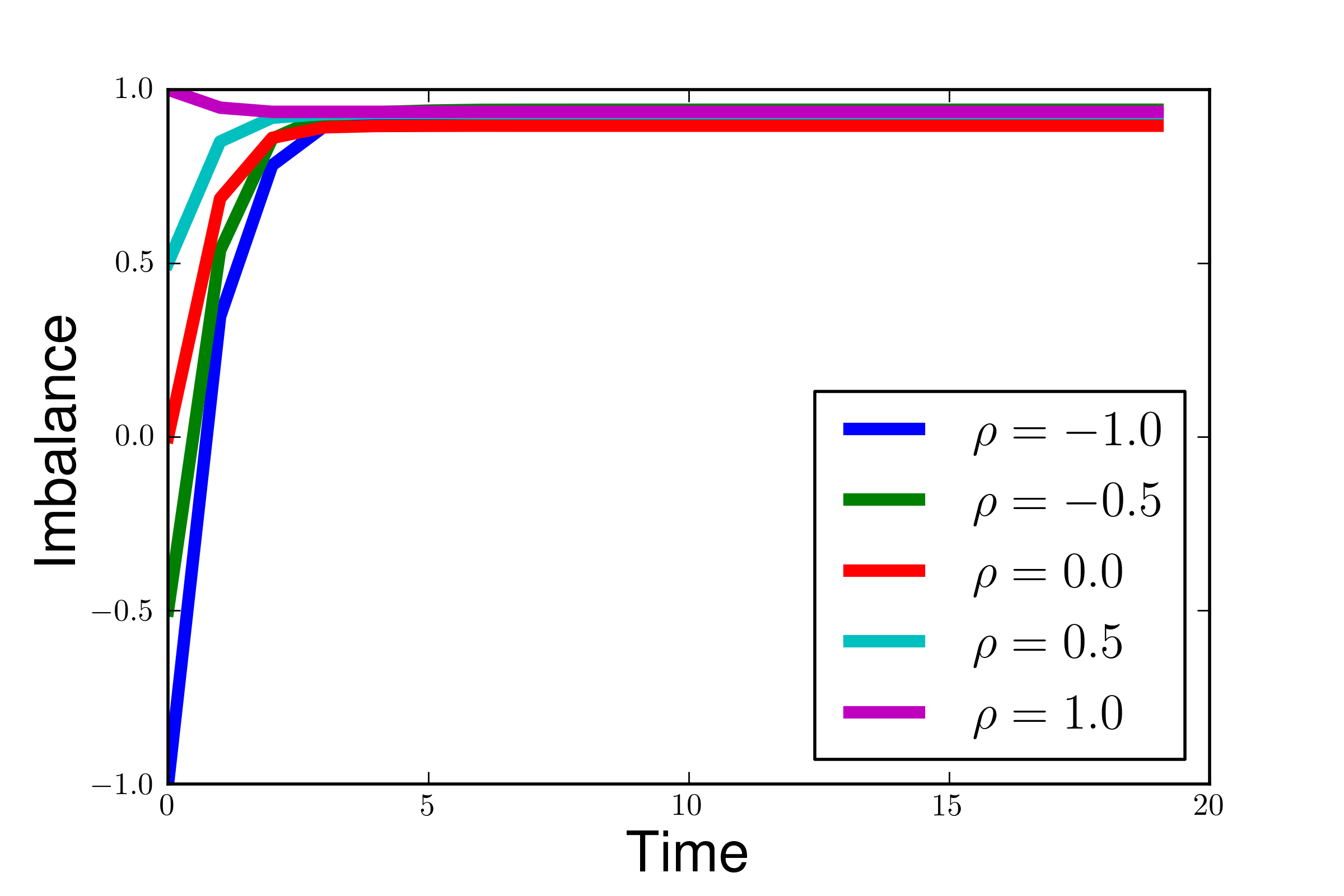} \label{fig:nested-imb}
}
\subfloat[Hamming distance as time increases for nested canalyzing distribution]{
\includegraphics[width=2.25in]{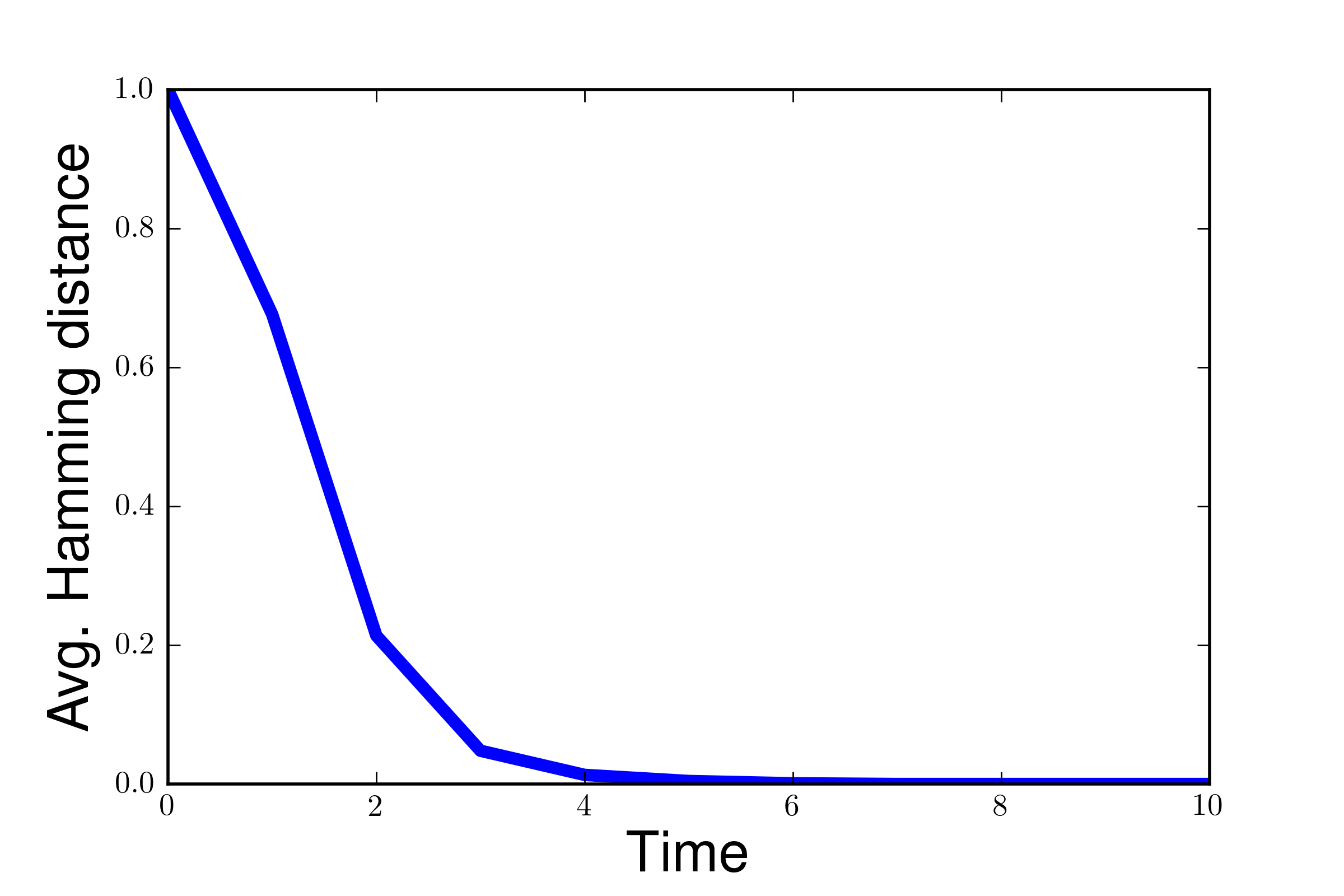} \label{fig:nested-hamm}
}
\caption{Experimental results}
\end{figure*}

{\bf Mixtures of threshold function families:} Threshold functions are commonly used to understand
the spread of new ideas/viral propogations in social networks, inspired by pioneering work in
sociology~\cite{Sc78,Gr78,KeKlTa03}. Think of two kinds of people (vertices) in a network. Some simply side with
the majority of their neighbors. Others are more resistant to change, and only take up a new belief
if all their neighbors believe it. 
We will first demonstrate our theorem on a synthetic distribution inspired by this application.
For simplicity of analysis (and to see all the math), set $K = 3$. The majority function is $\maj(y) = \sgn(\sum_i y_i)$
and the AND function $\al(y) = \sgn(\sum_i y_i + 2.5)$ (this is $-1$ iff all inputs are $-1$). Our distribution $\cF$
picks $\maj$ with probability $\beta$ and $\al$ with probability $1-\beta$. How much of the initial network
needs to have a new belief for it to propogate through the network? (And how is this sensitive to perturbation?)
Formally, what is the dynamics for initial distribution $\cD_\rho$? Think of a vertex state being $-1$ (TRUE) if that vertex currently
believes the new idea.
We start with the Fourier expansions of $\maj$ and $\al$. 
\begin{align*} & \maj(y) = \sum_i y_i/2 - y_1y_2y_3/2 \\
& \al(y) = 3/4 + \sum_i y_i/4 - \sum_{i \neq j} y_iy_j/4 + y_1y_2y_3/4
\end{align*}
We compute $\sigma_0(\cF) = 3(1-\beta)/4$, $\sigma_1(\cF) = 3\beta/2 + 3(1-\beta)/4 = 3(1+\beta)/4$,
$\sigma_2(\cF) = 3(\beta-1)/4$, and $\sigma_3(\cF) = -\beta/2 + (1-\beta)/4 = (1-3\beta)/4$.
From \Thm{rec},

\vspace{-5pt}

\begin{align*} \delta_{t+1} = (1-3\beta)\delta^3_{t}/4 & + 3(\beta-1)\delta^2_{t}/4 \\
& + 3(1+\beta)\delta_{t}/4 + 3(1-\beta)/4
\end{align*}
\vspace{-5pt}

Any fixed point is a root of the following polynomial $p(\delta)$ (which basically measures $\delta_{t+1} - \delta_t$).
Note that when $p(\delta_t) > 0$, then $\delta_{t+1} > \delta_t$ (and vice versa). 

\vspace{-5pt}
\begin{align*}
p(\delta) & = [(1-3\beta)\delta^3 + 3(\beta-1)\delta^2 + (3\beta-1)\delta + 3(1-\beta)]/4 \\
& = (\delta-1)(\delta+1)[(1-3\beta)\delta-3(1-\beta)]/4
\end{align*}
\vspace{-10pt}

This characterizes the limits of $\delta_t$ as $t \rightarrow \infty$ (assuming convergence).
The first two are trivial roots, since the all $-1$s and all $+1$s states are fixed points imbalances
for the Boolean network. The third root $3(1-\beta)/(1-3\beta)$ is a new valid
imbalance (in the range $(-1,1)$) only when $\beta > 2/3$. 

Now, we can explain the dynamics. (We ignore the trivial cases $\rho = -1,+1$.)
\begin{asparaitem}
	\item $\beta \leq 2/3$: The polynomial $p(z) > 0$ for any $z \in (-1,1)$. Hence, for any
	non-trivial starting distribution $\cD_\rho$, the Boolean network converges
	to the all $+1$s state. So the new belief will always die out. 
	\item $\beta > 2/3$: There exists a new unstable fixed point for the imbalance at $\delta^* = 3(1-\beta)/(1-3\beta)$.
	We have $p(z) > 0$ if $z > \delta^*$ and $p(z) < 0$ if $z < \delta^*$.  If $\rho > \delta^*$, the eventual
	state is all $+1$s. If $\rho < \delta^*$, the eventual state is all $-1$s. 
\end{asparaitem}

	To understand the sensitivity to bit flips, it is quite natural that for situations where
	$\delta_t$ converges to $-1$ or $+1$, the network is insensitive to perturbations.
	Calculations yield that $\Inf(\cF;-1)$ and $\Inf(\cF;+1)$ are $<1$.
	By \Thm{hamm}, the networks are quiescent.
	At $\rho = \delta^*$, $\I(\cF;\delta^*) = 3\beta(1-(\delta^*)^2)/2 + 3(1-\beta)(1-\delta^*)^2/4$.
	By some elementary algebra, $\I(\cF;\delta^*) > 1$ when $\beta > 2/3$. 
	Hence, for $\rho = \delta^*$, the dynamics are chaotic (again, this is expected).

We performed simulations on Boolean networks with $10^4$ nodes. For a given $\beta$, we vary the starting
distribution $\rho$ and measure the imbalances at $t = 20$. (This was averaged over 1000 runs.)
The results are in \Fig{maj-and}, where each colored line denotes a different choice of $\beta$. The 
predicted transition of $\delta^* = 3(1-\beta)/(1-3\beta)$ is denoted by the dashed line, coinciding
nicely with the experimental transition point. As expected
we see some fluctations (due to chaotic behavior at $\delta^*$) at the transition point.

{\bf Nested canalyzing functions:} For a real application, we consider the nested canalyzing functions of~\cite{Kauffman04}. 
(We provide a full description of this distribution in the supplement.) Previous work suggests that this distribution
is reflective of real biological networks and is quiescent. We can use our theorems
to validate the quiescence. Let us the consider the polynomial $\delta_{t+1}-\delta_t$.
For example at $K=5$, a technical calculation yields $p(\delta) = -0.001\delta^4 + 0.016\delta^3 - 0.11\delta^2 - 0.69\delta + 0.71$.
For $K=10$, $p(\delta) = -0.007\delta^4 + 0.012\delta^2 - 0.099\delta^2 - 0.7\delta + 0.71$. These polynomials
have a single stable root $\delta^* \approx 0.9$ in $[-1,+1]$.  
Even as $K$ varies, the root is quite stable, so that fixed point
imbalance is at least $0.9$ \emph{regardless} of the degree distribution. 

We perform experiments for varying degree distributions with $10^4$ nodes, and varying starting state distributions $\cD_\rho$.
(We show only the results for $K=5$ for space reasons.) In \Fig{nested-imb}, we plot the 
imbalance as a function of time for varying $\rho$. Observe that the imbalance \emph{always} converges to around $0.9$. \emph{This means that roughly 90\% of the nodes
converge to the $+1$ (FALSE) state.} The influence $\I(\cF;\delta^*)$ is roughly $0.3$, so the network is quiescent.
This is validated by the Derrida plot in \Fig{nested-hamm}, which plots average Hamming distance over time (for $\rho=0$).
We observe that the Hamming distance rapidly decays to $0$.

\vspace{-20pt}

\section{Acknowledgements}
\vspace{-10pt}
Sandia is a multiprogram laboratory operated by Sandia Corporation, a wholly owned subsidiary of Lockheed Martin Corporation, for the U.S. Department of Energy under contract DE-AC04-94AL85000.

\bibliographystyle{ieeetr}
\bibliography{bnfourier}

\newpage

\appendix

\centerline{\Large \bf Supplemental Material}
\bigskip

\section{Preliminaries and notation} \label{sec:prelims}

For convenience, we state and formalize many of the basic concepts already introduced in the main
body.

For $\rho \in [-1,1]$, we define
a biased distribution $\cD_\rho$ on $\cB$ as follows. The probability of $+1$
is $(1+\rho)/2$ and that of $-1$ is $(1-\rho)/2$. Note that expectation
is exactly $\rho$. We sometimes abuse notation and use $\cD_\rho$ to denote the product
distribution over $n$ bits. The uniform distribution is given by $\cD_0$.

We assume that there is a distribution $\cT$ on transfer functions.
Formally, this is a union of distributions $\cT_d$, where this family
only contains boolean functions that take $d$ inputs. For each vertex $v$
with indegree $d$, we first choose an independent function $f_v(y_1,y_2,\ldots,y_{d})$ 
from $\cT_{d}$. Randomly permute the in-neighbors of $i$ to get a list
$v_1, v_2, \ldots, v_{d}$. Assign the vertex $v_j$ to input $y_j$
of $\phi_i$. This gives us the transfer function for vertex $i$.

A convenient method for ignoring varying degrees is the following.
We assume that each vertex has an indegree of $\dm$, with neighbors
chosen randomly as before. Any function $\phi$ with less than $d < \dm$ inputs
can be extended to have $\dm$ inputs, where $\phi$ does not
depend on the new $\dm-d$ inputs. We now apply the same construction, where
there is a single distribution $\cT$ over input functions.

For a boolean network $\cN$, we use $f_t(x)$ to denote the total state after $t$ steps starting
with an initial state $x$. We use $f_{v,t}(x)$ to denote the (boolean) state at the vertex $v$.
Our aim is to understand $H_t = (1/n)\sum_{i = 1}^n \EX_{x \in \cD_\rho}[f_t(x) - f_t(\flip{x}{i})]$.
Meaning, we look at the expected Hamming distance over the starting state $x$ for a random
bit flip.
As proven in previous work, this is the same as $\frac{1}{n} \sum_{1 \leq u,v \leq n} \biasinf{\rho}{u}{f_{v,t}}$.
This is the average value
(over all vertices $v$) of $\sum_u \biasinf{\rho}{u}{f_{v,t}}$.
Since the construction of boolean networks is random where all vertices are symmetric,
in expectation, all these influence sums are the same. Hence, we will fix
a single vertex and focus on this sum.

\section{Fourier Analysis of Boolean Functions}

We will focus on functions of the form $f: \{-1,+1\}^K \longrightarrow \{-1,+1\}$.
We think of a function as a vector in $\RR^{2^K}$, which is just an
explicit representation of the truth table. 
The Fourier basis for Boolean functions (also called the Walsh-Hadamard basis) provides an alternate basis
to represent functions. 
\begin{definition} \label{def:four}
\begin{itemize}
	\item Let $S \subseteq [K]$. The \emph{parity on $S$} is the function 
	$\chi_S(y) = \prod_{i \in S} y_i$. Conventionally, the function $\chi_\emptyset$
	is a constant function that takes value $+1$.
	\item For $S \subseteq [K]$, define $\widehat{f}(S) := \EX_{x \in \cD_0}[f(y)\chi_S(y)] = 2^{-K} \sum_x f(y)\chi_S(y)$.
\end{itemize}

\end{definition}

The fundamental theorem is that the parities form an orthonormal basis for functions $f$
on the $\cB^d$. This gives the \emph{Fourier expansion of $f$}.

\begin{theorem} \label{thm:four} Every function $f: \{-1,+1\}^d \longrightarrow \RR$
is uniquely expressible as a linear combination of the parity functions.
Formally, $ f = \sum_{S \subseteq [K]} \widehat{f}(S) \chi_S$.
\end{theorem}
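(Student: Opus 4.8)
The plan is to show that the $2^K$ parity functions form an \emph{orthonormal basis} for the space of real-valued functions on $\{-1,+1\}^d$, and then read off the claimed expansion coefficients from orthonormality. First I would equip the space of functions $f:\{-1,+1\}^d \to \RR$ with the inner product $\langle f,g\rangle = \EX_{y\in\cD_0}[f(y)g(y)] = 2^{-K}\sum_y f(y)g(y)$. Evaluating a function at each of the $2^d$ points of the cube gives an obvious linear isomorphism between this space and $\RR^{2^K}$, so its dimension is exactly $2^K$, matching the number of subsets $S \subseteq [K]$.

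The key computation is the orthonormality of the parities. Using $y_i^2 = 1$ for every $y \in \{-1,+1\}^d$, I would first establish the identity $\chi_S \chi_T = \chi_{S \triangle T}$: any coordinate lying in both $S$ and $T$ contributes a factor $y_i^2 = 1$ and drops out, leaving precisely the coordinates in the symmetric difference. Then $\langle \chi_S, \chi_T\rangle = \EX_y[\chi_{S\triangle T}(y)] = \prod_{i \in S \triangle T} \EX_y[y_i]$ by independence of the coordinates under $\cD_0$. When $S = T$ this is the empty product, giving $\EX[\chi_\emptyset] = 1$; when $S \neq T$ the set $S \triangle T$ is nonempty and each factor $\EX_y[y_i] = 0$ (the bias of $\cD_0$ is zero), so the inner product vanishes. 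Hence $\langle \chi_S, \chi_T\rangle = \mathbf{1}[S = T]$.

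Orthonormality forces linear independence, so the $2^K$ parities form a linearly independent set inside a space of dimension $2^K$ and are therefore a basis. Consequently every $f$ is \emph{uniquely} expressible as $f = \sum_S c_S \chi_S$. To pin down the coefficients, I would pair both sides with $\chi_T$: by orthonormality $\langle f, \chi_T\rangle = \sum_S c_S \langle \chi_S, \chi_T\rangle = c_T$, and the left-hand side is exactly $\widehat{f}(T)$ by \Def{four}. This yields simultaneously the existence and the uniqueness of the expansion $f = \sum_S \widehat{f}(S)\chi_S$.

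I do not expect a genuine obstacle, since this is the standard orthonormal-basis argument; the only place requiring care is the identity $\chi_S \chi_T = \chi_{S \triangle T}$ together with the clean use of coordinate independence under $\cD_0$ to annihilate the cross terms, after which the conclusion is a pure dimension count.
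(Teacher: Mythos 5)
Your proof is correct, and it is exactly the standard orthonormality-plus-dimension-count argument: the paper itself states Theorem~\ref{thm:four} without proof, deferring to the literature (the reference \cite{OD14} in the main text), and that literature proof is precisely the one you give --- $\chi_S\chi_T=\chi_{S\triangle T}$, coordinate independence under $\cD_0$ killing off-diagonal inner products, and then a basis/dimension argument identifying the coefficients as $\widehat{f}(T)=\langle f,\chi_T\rangle$. The only cosmetic issue is the mixing of $d$ and $K$ for the arity, which you inherited from the paper's own statement and which does not affect the argument.
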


The influences are fundamentally connected to the Fourier expansion.

\begin{prop} \label{prop:inf} 
The value of $\biasinf{\rho}{i}{f}$ is equal to the following
three expressions.
\begin{itemize}
	\item $(1/4) \EX_{x \sim \cD_\rho}[(f(y) - f(y^{(i)}))^2]$
	\item $\EX_{x \sim \cD_\rho}[\big(\sum_{S \ni i} \wh{f}(S) \chi_{S \backslash i}(y)\big)^2]$
\end{itemize}
\end{prop}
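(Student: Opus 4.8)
The plan is to prove a chain of equalities: the defining quantity $\biasinf{\rho}{i}{f} = \pr_{y \in \cD_\rho}[f(y) \neq f(\flip{y}{i})]$ equals the first listed expression, which in turn equals the second. Both links are pointwise identities on $\{-1,+1\}^K$, so taking the expectation under $\cD_\rho$ at the very end transfers them to the claimed averaged form. Crucially, this route never invokes orthonormality of the parities, which would fail under the biased measure $\cD_\rho$.

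First I would handle the reduction to the squared difference. Since $f$ is $\{-1,+1\}$-valued, at every point $y$ the quantity $f(y) - f(\flip{y}{i})$ is either $0$ (when $f(y) = f(\flip{y}{i})$) or $\pm 2$ (when they differ), so $\tfrac14 (f(y) - f(\flip{y}{i}))^2$ is exactly the indicator of the event $f(y)\neq f(\flip{y}{i})$. Taking expectations over $y \sim \cD_\rho$ immediately turns the probability in the definition into the first listed expression, for every $\rho$.

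Next I would compute the Fourier expansion of the difference $f(y) - f(\flip{y}{i})$ using \Thm{four}. The key observation is how a parity $\chi_S$ behaves under a flip of coordinate $i$: if $i \notin S$ then $\chi_S(\flip{y}{i}) = \chi_S(y)$, while if $i \in S$ then $\chi_S(\flip{y}{i}) = -\chi_S(y)$. Hence the terms with $i \notin S$ cancel and those with $i \in S$ double, giving $f(y) - f(\flip{y}{i}) = 2\sum_{S \ni i}\wh{f}(S)\chi_S(y)$. Writing $\chi_S(y) = y_i\,\chi_{S\backslash i}(y)$ for $S \ni i$ and using $y_i^2 = 1$, squaring yields the pointwise identity $\tfrac14(f(y)-f(\flip{y}{i}))^2 = \big(\sum_{S \ni i}\wh{f}(S)\chi_{S\backslash i}(y)\big)^2$. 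Taking the expectation over $y \sim \cD_\rho$ gives the second listed expression and closes the chain.

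I do not expect a genuine obstacle here; the only point requiring care is resisting the temptation to expand the square and apply Parseval, since the parities are orthonormal only under $\cD_0$ and not under a general $\cD_\rho$. Keeping both steps as exact pointwise identities and deferring the single expectation to the very end is precisely what makes the argument valid for arbitrary bias $\rho$, which is the feature the later theorems rely on.
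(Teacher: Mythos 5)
Your proof is correct and follows essentially the same route as the paper: reduce the biased influence to $\tfrac14\EX[(f(y)-f(y^{(i)}))^2]$ via the indicator identity, then expand $f(y)-f(y^{(i)})$ in the Fourier basis, observe that terms with $i\notin S$ cancel while those with $i\in S$ contribute a factor of $2$, and square the resulting pointwise identity before taking the single expectation under $\cD_\rho$. Your explicit remark that one must avoid Parseval (orthonormality fails under biased measures) is a point the paper leaves implicit, but the argument itself is the same.
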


\begin{proof} Since the probability distribution is always $\cD_\rho$, we drop the subscript $x \sim \cD_\rho$.
We have $\biasinf{\rho}{i}{f} = \Pr[f(y) \neq f(y^{(i)})]$. 
Observe that
$(f(y) - f(y^{(i)}))^2 = 4$ if $f(y) \neq f(y^{(i)})$ and zero otherwise. 
Hence, $4\cdot\biasinf{\rho}{i}{f} = \EX[(f(y) - f(y^{(i)}))^2]$.
We expand this expression.
\begin{eqnarray*} & & 4\cdot\biasinf{\rho}{i}{f} = \EX[(f(y) - f(y^{(i)}))^2] \\
& = & \EX[\Big(\sum_S \wh{f}(S) (\prod_{j \in S}y_j - \prod_{j \in S}y^{(i)}_j) \Big)^2] \\
& = & \EX[\Big(\sum_{S \ni i} \wh{f}(S) (y_i - y^{(i)}_i)\prod_{j \in S\backslash i}y_j \Big)^2] \ \ \ \ \textrm{(since for $j \neq i$, $y_j = y^{(i)}_j$)}\\
& = & 4\EX[\Big(\sum_{S \ni i} \wh{f}(S) \prod_{j \in S\backslash i}y_j\Big)^2] \ \ \ \ \textrm{(since $|y_i - y^{(i)}_i| = 2$)}
\end{eqnarray*}
\qed\\
\end{proof}

\ignore{

\begin{corollary} \label{cor:inf} The total unbiased influence of a function $\sum_i \biasinf{0}{i}{f}$
is $\sum_{S} |S| \wh{f}(S)^2$.
\end{corollary}

\begin{proof} By the last item in \Prop{inf}, $\biasinf{\rho}{i}{f} = \sum_{S \ni i, T \ni i} \wh{f}(S) \wh{f}(T) \rho^{|S \Delta T|}$.
When $\rho = 0$, any term corresponding to $|S \Delta T| \neq 0$ vanishes. Hence, what remains
is $\biasinf{0}{i}{f} = \sum_{S \ni i} \wh{f}(S)^2$. We sum this over all $i$ 
to get that the total influence is $\sum_i \sum_{S \ni i} \wh{f}^2(S)$. Observe that $\wh{f}(S)^2$
appears exactly $|S|$ times in the double summation, so this is $\sum_{S} |S| \wh{f}(S)^2$.
\qed\\
\end{proof}
}

\section{Deriving the recurrences} \label{sec:tree} 

Fix a vertex $v$. Let us consider the function $f_{v,t}$ for small $t \ll \log n$. 
Previous work tells us that we can assume (asymptotically) this is a rooted tree~\cite{Seshadhri11}.
We use $\nbdt{t}{v}$
to denote the $t$-step in-neighborhood of vertex $i$.

\begin{claim} \label{clm:g-tree} Fix a vertex $v$ and let
$t \leq (\log n)/(4\dm)$. The probability that 
the subgraph induced by $\nbdt{t}{v}$ is a directed tree
is at least $1 - 1/\sqrt{n}$.
\end{claim}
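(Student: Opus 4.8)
The plan is to run the standard breadth-first exploration of the in-neighborhood and bound, by a first-moment (union bound) argument, the probability that any ``collision'' occurs during the exploration. First I would observe that if $\nbdt{t}{v}$ were a tree, it would embed in the complete rooted $\dm$-ary tree of depth $t$, and hence contain at most $\sum_{j=0}^{t} \dm^j \le 2\dm^t$ vertices (using $\dm \ge 2$; the degenerate case $\dm = 1$ is a path and is handled directly). The subgraph induced by $\nbdt{t}{v}$ fails to be a directed tree exactly when, during the exploration, some revealed in-edge points to a vertex that has already been discovered --- either closing a cycle or merging two branches.

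The core of the argument is to expose the in-neighbors one vertex at a time, layer by layer, starting from $v$. At each newly discovered vertex I reveal its $\dm$ in-neighbors, each of which, conditioned on the history so far, is a uniformly random vertex of the graph. Since the exploration is continued only while the discovered set has size at most $2\dm^t$, the conditional probability that any single revealed in-edge lands inside the already-discovered set is at most $2\dm^t/n$ (choosing in-neighbors without replacement only decreases this). The total number of in-edges revealed before the exploration reaches depth $t$ is at most $\dm \cdot 2\dm^t = 2\dm^{t+1}$, so a union bound over all of them gives
\[
\pr[\nbdt{t}{v} \text{ is not a tree}] \;\le\; 2\dm^{t+1} \cdot \frac{2\dm^t}{n} \;=\; \frac{4\dm^{2t+1}}{n}.
\]

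It then remains to verify that the hypothesis $t \le (\log n)/(4\dm)$ makes this at most $1/\sqrt n$. Plugging in, $\dm^{2t} = \exp(2t \ln \dm) \le n^{(\ln \dm)/(2\dm)}$, and since $(\ln \dm)/(2\dm) \le 1/(2e)$ for every $\dm \ge 1$, we get $\dm^{2t} \le n^{1/(2e)}$. Hence $4\dm^{2t+1}/n \le 4\dm\, n^{1/(2e)-1}$, and as $1/(2e)-1 < -1/2$ this is $o(n^{-1/2})$ for fixed $\dm$, so it falls below $1/\sqrt n$ once $n$ is large; the constants then absorb cleanly. The hard part --- indeed the only step needing real care --- will be making the sequential-exposure argument rigorous: one must argue that conditioning on the exploration history leaves each freshly revealed in-edge uniform over the remaining vertices, so that the per-edge collision probability is genuinely bounded by (discovered-set size)$/n$ and the union bound applies. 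Everything else is the routine geometric-sum and exponent bookkeeping displayed above.
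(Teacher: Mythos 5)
The paper never actually proves this claim---the text around it says ``Previous work tells us that we can assume (asymptotically) this is a rooted tree'' and defers entirely to ref.~\cite{Seshadhri11}---so there is no in-paper argument to compare yours against line by line. Your breadth-first exposure plus union bound is the standard (and surely the intended) route, and most of your accounting is right: the discovered set never exceeds $\sum_{j=0}^{t}\dm^j\le 2\dm^t$ vertices whether or not collisions occur; your parenthetical that sampling in-neighbors without replacement only helps is correct (if $u$'s previously revealed in-neighbors already lie in the discovered set $D$, the next reveal lands in $D$ with probability $(|D|-r)/(n-r)\le |D|/n$); and $t\le(\log n)/(4\dm)$ gives $\dm^{2t}\le n^{(\ln\dm)/(2\dm)}\le n^{1/(2e)}$, so $4\dm^{2t+1}/n=O(\dm\,n^{1/(2e)-1})=o(n^{-1/2})$. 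That the final inequality needs $n$ larger than a constant depending on $\dm$ is cosmetic, since the claim is asymptotic and is vacuous for $t\ge 1$ unless $\log n\ge 4\dm$ anyway. The sequential union bound should be phrased as a first-failure decomposition (sum over reveals of the probability of colliding given no earlier collision, each term bounded uniformly over collision-free histories), which is the care you yourself flagged.

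There is, however, one genuine flaw in your characterization of the failure event. The claim concerns the subgraph \emph{induced} by $\nbdt{t}{v}$, which contains every edge of $G$ between discovered vertices---including the in-edges of the depth-$t$ vertices, which a BFS that stops upon reaching depth $t$ never reveals. So ``no revealed in-edge hits the discovered set'' does not capture failure: the revealed edges can form a perfect tree while some depth-$t$ vertex has an in-neighbor inside $D=\nbdt{t}{v}$, giving that leaf an in-edge within the induced subgraph and destroying tree-ness. The repair is immediate and your arithmetic already budgets for it: after the BFS, additionally reveal the at most $\dm^{t+1}$ in-edges of the depth-$t$ vertices and declare failure if any of them lands in $D$. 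Each such edge does so with probability at most $|D|/n\le 2\dm^t/n$, and the total number of revealed edges in both phases is at most $2\dm^t+\dm^{t+1}\le 2\dm^{t+1}$, which is exactly your edge count, so the bound $4\dm^{2t+1}/n$ survives verbatim. One then checks that no collision in this extended sense really does force the induced subgraph to be a directed in-tree: every vertex of $D$ at depth at most $t-1$ has all its in-edges revealed and pointing to fresh, distinct children, no vertex is discovered twice, and depth-$t$ vertices have no in-edges inside $D$; hence every non-root vertex has exactly one out-edge (to its parent) and the root has none. With that amendment your proof is complete.
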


\textbf{The distribution $\cB_t$:} We define a distribution on Boolean networks that runs for $t$
steps on rooted trees with height $t$. This captures the $t$-neighborhood of $v$
based on \Clm{g-tree}. 
We take a $\dm$-ary directed tree rooted at $v$ of depth $t$ , with edges pointing towards the root $v$. 
For every internal node $u$, we choose a transfer function $\phi_u$ distributed according to $\cF$.
The leaves of the tree are the input nodes, collectively denoted as $x$. We will set the state
at leaf nodes from the distribution $\cD_{\rho}$. So $\delta_0 = \rho$ is the initial imbalance.

The Boolean network runs for $t$ steps
to yield the state at the root. Observe that for a vertex $u$ at height $h$, only the function $f_{u,h}$
is defined. 

We will use $v_1, v_2, \ldots$ to denote the children of $v$.
The Fourier expansion yields
the following claim. This innocuous statement is the heart of the analysis.

\begin{claim} \label{clm:fv} $f_{v,t} = \sum_{A \subseteq [\dm]} \wh{\phi_v}(A) \prod_{i \in A} f_{v_i,t-1}$
\end{claim}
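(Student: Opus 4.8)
The plan is to derive the identity directly from the one-step update rule of the Boolean network combined with the Fourier expansion guaranteed by \Thm{four}. On the tree distribution $\cB_t$, the root $v$ updates its state by feeding the states of its children into its transfer function $\phi_v$; expanding $\phi_v$ in the Fourier basis and then substituting the child-state functions for the inputs will yield the claimed formula. Concretely, I would first record the defining recursion of the dynamics: for any fixed leaf assignment $x$ (the input bits of the tree), the state produced at $v$ after $t$ steps equals $\phi_v$ applied to the states produced at the children $v_1,\ldots,v_{\dm}$ after $t-1$ steps,
\[
f_{v,t}(x) = \phi_v\big(f_{v_1,t-1}(x), \ldots, f_{v_{\dm},t-1}(x)\big).
\]
This is just the update rule $x_i(t+1)=f_i(x_{i_1}(t),\ldots)$ specialized to the tree, and it is well-defined because each child $v_i$ sits at height $t-1$, so $f_{v_i,t-1}$ is the state function already computed for the subtree rooted at $v_i$.

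Second, I would apply \Thm{four} to the transfer function $\phi_v \colon \{-1,+1\}^{\dm} \to \{-1,+1\}$, writing $\phi_v(y) = \sum_{A \subseteq [\dm]} \wh{\phi_v}(A)\,\prod_{i \in A} y_i$. Since this is a polynomial identity holding at every point of $\{-1,+1\}^{\dm}$, and since for each fixed $x$ the tuple $(f_{v_1,t-1}(x),\ldots,f_{v_{\dm},t-1}(x))$ is itself a point of $\{-1,+1\}^{\dm}$, I can substitute $y_i = f_{v_i,t-1}(x)$ coordinate-wise. Combining this substitution with the recursion above gives, for every $x$,
\[
f_{v,t}(x) = \sum_{A \subseteq [\dm]} \wh{\phi_v}(A) \prod_{i \in A} f_{v_i,t-1}(x),
\]
which is exactly the stated identity read as an equation between functions of $x$.

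The step deserving the most care is the substitution in the second move: one must be explicit that the Fourier expansion is being used as a pointwise polynomial identity over the Boolean cube, rather than as the expansion of a single fixed scalar. This is legitimate precisely because each $f_{v_i,t-1}$ is $\{-1,+1\}$-valued, so plugging these functions into the multilinear representation of $\phi_v$ agrees with evaluating $\phi_v$ at the corresponding cube point for each $x$. I would also remark that, since $\cB_t$ is a tree (by \Clm{g-tree} this is the relevant structure), the subtrees rooted at distinct children are vertex-disjoint, so the functions $f_{v_i,t-1}$ depend on disjoint blocks of leaves. That independence is not needed to establish the present identity, but it is what makes the identity tractable when later taking expectations and extracting the influence and imbalance recurrences.
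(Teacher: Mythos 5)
Your proof is correct and follows essentially the same route as the paper's: apply the one-step update rule at the root, then substitute the children's state functions into the pointwise Fourier (multilinear polynomial) expansion of $\phi_v$. Your added care about the substitution being a pointwise identity over the cube, and the remark on disjoint subtrees, are sound elaborations of the same argument rather than a different approach.
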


\begin{proof} Suppose the state at $v_i$ is $y_i$. The state at $v$ is determined by
applying the transfer function $\phi_v$ on the states $(y_1, y_2, \ldots, y_K)$.
Using the Fourier expansion of $\phi_v$, we get the state at $v$
is $\sum_{A \subseteq [K]} \wh{\phi_v}(A) \prod_{i \in A} y_i$. The state $y_i$ is given
by the function $f_{v_i,t-1}$, and the state at $v$ is $f_{v,t}$.
\qed\\
\end{proof}

\subsection{The imbalance recurrence}

We derive a polynomial recurrence for $\bal_t$, the expected imbalance at a vertex after $t$ steps.
We have $\bal_t = \EX_{\cB_t}[\EX_{x \in \cD_\rho}[f_{v,t}(x)]]$.
For any $r$, remember that $\sigma_r = \EX_{\phi \sim \cF}[\sum_{C: |C| = r} \widehat{\phi}(C)]$.

\begin{theorem} \label{thm:bal} Let $\bal_t$ be the expected imbalance at time $t$.
For $t \geq 1$, $\bal_t$ evolves according to the following iterated polynomial map.
\begin{equation}
	\bal_t = \sum_{r \geq 0} \sigma_r \bal^r_{t-1} \label{eqn:bal}
\end{equation}
\end{theorem}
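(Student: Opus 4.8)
The plan is to substitute the Fourier recursion of \Clm{fv} directly into the definition of $\bal_t$ and then exploit the independence structure guaranteed by the tree model $\cB_t$. Starting from $\bal_t = \EX_{\cB_t}[\EX_{x \in \cD_\rho}[f_{v,t}(x)]]$, I would apply \Clm{fv} to write $f_{v,t} = \sum_{A \subseteq [\dm]} \wh{\phi_v}(A) \prod_{i \in A} f_{v_i,t-1}$, and then use linearity of expectation to pull the two expectations inside the finite sum over $A \subseteq [\dm]$.

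The key step is to factor the expected value of the product $\prod_{i \in A} f_{v_i,t-1}$. In the tree model, the subtrees rooted at distinct children $v_i$ are vertex-disjoint, so they use disjoint sets of internal transfer functions and disjoint sets of input leaves (each leaf receiving an independent $\cD_\rho$ bit). Consequently the random variables $f_{v_i,t-1}$ are mutually independent, both over the random transfer functions and over the random input $x$. This justifies $\EX[\prod_{i \in A} f_{v_i,t-1}] = \prod_{i \in A}\EX[f_{v_i,t-1}]$. Since each child subtree is distributed identically to a depth-$(t-1)$ instance of $\cB$, every factor equals $\bal_{t-1}$, so the product is $\bal_{t-1}^{|A|}$.

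Next I would use that the root function $\phi_v$ is drawn from $\cF$ independently of everything in the subtrees, so $\EX[\wh{\phi_v}(A)\, \bal_{t-1}^{|A|}] = \EX_{\phi_v \sim \cF}[\wh{\phi_v}(A)]\, \bal_{t-1}^{|A|}$. Grouping the sum over $A$ by cardinality $r = |A|$ collapses the inner sum into $\sum_{A:|A|=r}\EX_{\phi_v}[\wh{\phi_v}(A)] = \sigma_r$, yielding exactly $\bal_t = \sum_{r \geq 0} \sigma_r \bal_{t-1}^r$. The base case $t=1$ is consistent, since then the children are leaves with $\EX[f_{v_i,0}]=\rho=\bal_0$.

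The main obstacle is rigorously justifying the factorization into a product of expectations, i.e.\ the mutual independence of the $f_{v_i,t-1}$. This rests entirely on the tree structure: only because the $t$-neighborhood is a tree (by \Clm{g-tree}) do distinct subtrees share neither an input leaf nor an internal node, which is precisely what decouples both the function randomness and the input randomness. In a general graph, where subtrees could read the same input bits, this step would fail, so the reduction to $\cB_t$ is doing the essential work.
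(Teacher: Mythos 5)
Your proposal is correct and follows essentially the same route as the paper's proof: substitute the Fourier expansion of \Clm{fv} into the definition of $\bal_t$, apply linearity of expectation, factor the expectation of the product using the mutual independence of $\phi_v$ and the disjoint child subtrees, identify each factor with $\bal_{t-1}$ via the recursive structure of $\cB_t$, and collect terms by $|A|$ to obtain $\sigma_r$. Your added remarks on the base case and on why the tree structure is what licenses the factorization are consistent with (and slightly more explicit than) the paper's own justification.
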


\begin{proof} 
We take expectations of the formula in \Clm{fv}. (Verbal explanation follows.)
\begin{eqnarray*}
\bal_t = \EX_{x,\cB_t}[f_{v,t}(y)] & = & \EX_{x,\cB_t}[\sum_{A \subseteq [\dm]} \wh{\phi_v}(A) \prod_{i \in A} f_{v_i,t-1}(y)] \\
& = & \sum_{A \subseteq [\dm]} \EX_{x,\cB_t}\Big[\wh{\phi_v}(A) \prod_{i \in A} f_{v_i,t-1}(y)\Big] \\
& = & \sum_{A \subseteq [\dm]} \EX_{\cF} [\wh{\phi}(A)] \prod_{i \in A} \EX_{x,\cB_{t-1}} [f_{v_i,t-1}(y)]
\end{eqnarray*}
The second line is just linearity of expectation. The final line is obtained through independence. Note that $\phi_v$
is independent of the Boolean networks rooted at the $v_i$s. These Boolean networks are also independent of each other.
Hence, the expectation of the product is the product of expectations. The function $\phi_v$ is a random
function $\phi$ chosen from $\cF$. Because of the recursive
construction, the distribution of $\cB_t$ rooted at $v$ induces the distribution of $\cB_{t-1}$ rooted at the $v_i$s.
Now, observe that $\EX_{x,\cB_{t-1}} [f_{v_i,t-1}(y)] = \bal_{t-1}$. 

Plugging this in and collecting all terms corresponding to sets of the same size,
\begin{eqnarray*} 
\bal_t & = & \sum_{A \subseteq [\dm]} \bal^{|A|}_{t-1} \EX_{\cF} [\wh{\phi}(A)] \\
& = & \sum_{r \geq 0} \bal^{r}_{t-1} \sum_{A: |A| = r} \EX_\cF[\wh{\phi}(A)] 
= \sum_{r \geq 0} \sigma_r \bal^r_{t-1}
\end{eqnarray*}
\qed\\
\end{proof}

\subsection{The spreading of perturbations}

We focus on $\Iex_t(\rho_0)$, the expected average (over all nodes) influence of a node
at $t$-steps, when the initial distribution is $\cD_{\rho_0}$.
By the tree approximation, it suffices to focus on the node $v$ and consider the
distribution $\cB_t$.
We can express $\Iex_t(\rho_0)$ as follows.

By the tree approximation, 
$H_t = \EX_{\cB_t}[\sum_\ell \Inf_\ell(f_{v,t}); \rho]$ (where $\ell$ is over all leaves). In words, we look at the $\rho$-biased influence summed over all leaves.
For convenience, we will drop the time/height subscript and simply write $f_u$ instead of $f_{u,h}$.

\begin{theorem} $H_t = \prod_{0 \leq h < t} {\biastot{{\bal_h}}{\cF}}$
\end{theorem}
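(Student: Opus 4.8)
The plan is to induct on $t$, peeling off one level of the tree $\cB_t$ at a time so as to reduce the total leaf-influence of $f_{v,t}$ to that of the height-$(t-1)$ subtrees rooted at the children $v_1,\ldots,v_\dm$ of $v$. The base case is $t=0$: the tree is a single leaf, $f_{v,0}$ is the identity, so $\sum_\ell \Inf_\ell(f_{v,0};\rho)=1$, which matches the empty product.

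For the inductive step, fix a leaf $\ell$ and let $v_i$ be the child whose subtree contains $\ell$. First I would differentiate \Clm{fv}: since flipping $\ell$ only alters inputs feeding $v_i$, every child state $f_{v_j,t-1}$ with $j\neq i$ is unchanged, whence
\[
f_{v,t}(y)-f_{v,t}(\flip{y}{\ell}) = \big(f_{v_i,t-1}(y)-f_{v_i,t-1}(\flip{y}{\ell})\big)\,G_i, \quad G_i := \sum_{A\ni i}\wh{\phi_v}(A)\prod_{j\in A\setminus i} f_{v_j,t-1}(y).
\]
Applying the squared-difference form of \Prop{inf} gives $\Inf_\ell(f_{v,t};\rho)=\tfrac14\EX_y[(f_{v_i,t-1}(y)-f_{v_i,t-1}(\flip{y}{\ell}))^2\,G_i^2]$. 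The crucial structural point is that the first factor depends only on the randomness inside the subtree of $v_i$ (its transfer functions and leaves), whereas $G_i$ depends only on $\phi_v$ and the disjoint subtrees $\{v_j\}_{j\neq i}$. Taking $\EX_{\cB_t}$ and invoking this independence factors the expectation, and the first factor collapses via \Prop{inf} to $4\,\EX[\Inf_\ell(f_{v_i,t-1};\rho)]$, yielding $\EX_{\cB_t}[\Inf_\ell(f_{v,t};\rho)]=\EX[\Inf_\ell(f_{v_i,t-1};\rho)]\cdot\EX[G_i^2]$.

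The step I expect to be the heart of the argument, and the main obstacle, is identifying $\EX[G_i^2]$ exactly with a biased influence of $\cF$. The key realization is that each child state $z_j:=f_{v_j,t-1}(y)$ is a $\{-1,+1\}$-valued variable whose mean is, by definition, the imbalance $\bal_{t-1}$; since the subtrees are independent, the vector $(z_j)_{j\neq i}$ is distributed as i.i.d.\ $\cD_{\bal_{t-1}}$, independently of $\phi_v$. Substituting $z_j$ into $G_i$ makes $G_i=\sum_{A\ni i}\wh{\phi_v}(A)\prod_{j\in A\setminus i}z_j$ literally the expression squared in the second form of \Prop{inf}, so $\EX[G_i^2]=\EX_{\phi\sim\cF}[\Inf_i(\phi;\bal_{t-1})]$. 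The care needed here is the book-keeping of which source of randomness controls which factor, so that the biased influence appears with bias exactly $\bal_{t-1}$, the imbalance supplied by \Thm{bal}.

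Finally I would sum over leaves. Since $G_i$ does not depend on $\ell$, summing over the leaves in the subtree of $v_i$ gives $\EX[G_i^2]\sum_{\ell}\EX[\Inf_\ell(f_{v_i,t-1};\rho)]$, and the inner sum is exactly $H_{t-1}$ by the recursive definition of $\cB_t$. Summing over $i$ and combining with the previous step gives $H_t = H_{t-1}\sum_i\EX_\cF[\Inf_i(\phi;\bal_{t-1})]=H_{t-1}\cdot\I(\cF;\bal_{t-1})$. Unfolding this recurrence from $H_0=1$ produces the claimed product $\prod_{0\le h<t}\I(\cF;\bal_h)$.
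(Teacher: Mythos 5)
Your proposal is correct and follows essentially the same route as the paper's proof: the same leaf-partition by child subtrees, the same factorization of the squared difference into a subtree influence times the square of $G_i$, the same identification of the child states as i.i.d.\ $\cD_{\bal_{t-1}}$ variables so that $\EX[G_i^2]=\EX_{\phi\sim\cF}[\Inf_i(\phi;\bal_{t-1})]$ via \Prop{inf}, and the same recurrence $H_t = H_{t-1}\cdot \I(\cF;\bal_{t-1})$. The only cosmetic difference is that you package the argument as an explicit induction with base case $H_0=1$, where the paper derives the one-step recurrence and uncoils it.
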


\begin{proof} Partition the leaves into subsets $S_1, S_2, \ldots, S_{\dm}$, where
$S_i$ contains all leaves that are descendants of $v_i$. 
Focus on a leaf $\ell \in S_1$.
By \Prop{inf} and \Clm{fv},
\begin{eqnarray*} & & \EX_{\cB_t} [\Inf_{\ell}(f_{v}; \rho)] \\
& = & (1/4)\EX_{\cB_t,x \sim \cD_{\rho}}[(f_{v}(x) - f_{v}(x^{(\ell)}))^2] \\
& = & (1/4)\EX_{\cB_t,x \sim \cD_{\rho}}[\Big\{\sum_A \wh{\phi}_v(A) (\prod_{i \in A} f_{v_i}(x) - \prod_{i \in A} f_{v_i}(x^{(\ell)}))\Big\}^2] 
\end{eqnarray*}
Observe that for $i \neq 1$, $f_{v_i}(x) = f_{v_i}(x^{(\ell)})$. (This is because $\ell$
is not in the subtree of $v_i$.)
In the summation
above, only the terms corresponding to $A \ni 1$ are non-zero. Expanding further,
\begin{eqnarray*}
& & \big\{\sum_{A \ni 1} \wh{\phi}_v(A) (\prod_{i \in A} f_{v_i}(x) - \prod_{i \in A} f_{v_i}(x^{(\ell)}))\big\}^2 \\
& = & \Big\{\sum_{A \ni 1} \wh{\phi}_v(A) \big(\prod_{\substack{i \in A\\i \neq 1}} f_{v_i}(x)\big)(f_{v_{1}}(x) - f_{v_{1}}(x^{(\ell)}))\Big\}^2 \\
& = & \big(f_{v_{1}}(x) - f_{v_{1}}(x^{(\ell)})\big)^2
\Big\{\sum_{A \ni 1} \wh{\phi}_v(A) \prod_{\substack{i \in A\\i \neq 1}} f_{v_i}(x)\Big\}^2
\end{eqnarray*}
Each $f_{v_i}$ is defined over disjoint parts of the underlying tree with disjoint inputs.
Hence, when we take the expectation $\EX_{\cB_t,x}$ over the product, we get the product
of expectations. Moreover, $(1/4) \EX_{\cB_t,x \sim \cD_{\rho}}[\big(f_{v_{1}}(x) - f_{v_{1}}(x^{(\ell)})\big)^2]$
is exactly $\EX_{\cB_{t-1}} [\biasinf{\rho}{\ell}{f_{v_{1}}}]$. 

The random variable $f_{v_i}(x)$ is in $\{-1,+1\}$ and $\EX_{\cB_{t-1},x \sim \cD_{\rho}}[f_{v_i}(x)] = \bal_{t-1}$.
Hence, it is distributed as $\cD_{\bal_{t-1}}$. Taking expectations over $\cB_{t-1},x$, setting $y_i = f_{v_i}(x)$
and \Prop{inf},
\begin{eqnarray*} 
& & \EX_{\cB_{t},x \sim \cD_{\rho}}[\big\{\sum_{A \ni 1} \wh{\phi}_v(A) \prod_{\substack{i \in A\\i \neq 1}} f_{v_i}(x)\big\}^2] \\
& = & \EX_{\phi \sim \cF, y \sim \cD_{\bal_{t-1}}} [\big\{\sum_{A \ni 1} \wh{\phi}_v(A) \prod_{\substack{i \in A\setminus 1}} y_i\big\}^2] \\
& = & \EX_{\phi \sim \cF}[\biasinf{\bal_{t-1}}{1}{\phi}] 
\end{eqnarray*}
In general, for $\ell \in S_i$, we get 
$\EX_{\cB_t} [\Inf_{\ell}(f_{v}; \rho)] = \EX_\cF[\biasinf{\bal_{t-1}}{i}{\phi}] \EX_{\cB_{t-1}} [\biasinf{\rho}{\ell}{f_{v_{i}}}]$.
We combine all our observations. 
\begin{eqnarray*}
H_t & = & \sum_\ell \EX_{\cB_t}[\Inf_{\ell}(f_{v,t}; \rho)] \\
& = & \sum_{i=1}^\dm \sum_{\ell \in S_i} \EX_{\cB_t}[\Inf_{\ell}(f_{v,t})] \\
& = & \sum_{i=1}^{\dm} \EX_{\cF} [\biasinf{\bal_{t-1}}{i}{\phi}] \sum_{\ell \in S_i} \EX_{\cB_{t-1}} [\biasinf{\rho}{\ell}{f_{v_{i}}}]  \\
& = & \sum_{i=1}^{\dm} \EX_{\cF} [\biasinf{\bal_{t-1}}{i}{\phi}] \EX_{\cB_{t-1}}[ \sum_{\ell \in S_i} \biasinf{\rho}{\ell}{f_{v_{i}}}] \\
& = & H_{t-1} \sum_{i=1}^{\dm} \EX_{\cF} [\biasinf{\bal_{t-1}}{i}{\phi}] \\
& = & H_{t-1} \cdot {\biastot{\bal_{t-1}}{\cF}} 
\end{eqnarray*}
Uncoiling the recurrence yields the theorem.
\qed\\
\end{proof}

\newpage
\section{Nested canalyzing functions}

For completeness, we describe this distribution.
Fix positive integer
$\alpha$ and a series of canalyzing input values $c_1, c_2, \ldots, c_K$ and $d_1, d_2, \ldots, d_K, d_{def}$ (where
each of these is in $\{-1,+1\}$). The function is defined as follows:
$$ f(x) = 
\begin{cases}
	d_1 & \text{if } y_1 = d_1 \\
	d_2 & \text{if } y_1 \neq d_1 \text{and } y_2 = d_2 \\
	\vdots & \\
	d_K & \text{if } y_1 \neq d_1, y_2 \neq d_2, \ldots, y_{K-1} \neq d_{K-1} \text{and } y_K = d_k \\
	d_{def} & \text{otherwise}
\end{cases}
$$
For any parameter $\alpha > 0$, the distribution is given by $\Pr[c_i = -1] = \Pr[d_i = -1] = \exp(-\alpha/2^i)/(1+\exp(-\alpha/2^i))$.
Kauffman et al suggest that $\alpha = 7$ is reflective of real biological networks, and corresponding boolean networks
are quiescent.

\end{document}